\documentclass[10pt]{article}

\usepackage[latin9]{inputenc}
\usepackage{latexsym, bm, amsmath, amssymb, graphics, amsthm,
enumerate} %
\usepackage[a4paper,
            bindingoffset=0.2in,
            left=0.8in,
            right=0.8in,
            top=0.8in,
            bottom=0.8in,
            footskip=.25in]{geometry}
\usepackage{xcolor,placeins}
\usepackage[ruled,vlined]{algorithm2e}
\usepackage[round]{natbib}
\bibliographystyle{plainnat}
\usepackage{tikz}

\usetikzlibrary{calc,trees,positioning,arrows,chains,shapes.geometric,  decorations.pathreplacing,decorations.pathmorphing,shapes,  matrix,shapes.symbols}
\usepackage{tikz-cd}
\tikzcdset{scale cd/.style={every label/.append style={scale=#1},
    cells={nodes={scale=#1}}}}



\raggedbottom

\usepackage{hyperref} 
\usepackage{multirow,multicol}

\usetikzlibrary{arrows.meta,
                positioning,
                shapes}
\newtheorem{theorem}{Theorem}

\newtheorem{lemma}{Lemma}[section]

\newtheorem{axiom}{Axiom}

\newtheorem{Ex}{Example}









\usepackage{comment}

\begin{document}

\title{Axiomatization of R\'enyi Entropy on Quantum Phase Space\footnote{We thank Samson Abramsky, Roberto Corrao, Mar\'ia Garc\'a D\'iaz, Pierre Tarres, Mark Wilde, Noson Yanofsky, Stuart Zoble, and conference audiences at NYU Shanghai and Southwestern University of Finance and Economics for important input.  Ruodu Wang generously shared a significant extension of the previous version of Theorem 3.  Financial support from NYU Stern School of Business, NYU Shanghai, J.P. Valles, and the HHL - Leipzig Graduate School of Management is gratefully acknowledged.  ChatGPT and Grok were used to help find proof tactics, create numerical examples, identify references, and prepare Figures 1 and 2.}}

\author
{Adam Brandenburger \footnote{Stern School of Business, Tandon School of Engineering, NYU Shanghai, New York University, New York, NY 10012, U.S.A., adam.brandenburger@stern.nyu.edu}
\and{Pierfrancesco La Mura \footnote{HHL - Leipzig Graduate School of Management, 04109 Leipzig, Germany, plamura@hhl.de}}
    }
\date{\today}
\maketitle

\thispagestyle{empty}

\begin{abstract}
Phase-space versions of quantum mechanics -- from Wigner's original distribution to modern discrete-qudit constructions -- represent some states with negative quasi-probabilities.  Conventional Shannon and R\'enyi entropies become complex-valued in this setting and lose their operational meaning.  Building on the axiomatic treatments of R\'enyi (1961) and Dar\'oczy (1963), we develop a conservative extension that applies to signed finite phase spaces and identify a single admissible entropy family, which we call signed R\'enyi $\alpha$-entropy (for a free parameter $\alpha \ge 0$).  The obvious signed Shannon candidate is ruled out because it violates extensivity.  We prove four results that bolster the usefulness of the new measure.  (i) It serves as a witness of the presence of cancellation, detecting the coexistence of positive and negative weight in a signed measure.  (ii) For $\alpha > 1$, it is Schur-concave, delivering the intuitive property that mixing increases entropy  (iii) The same parametric family obeys a quantum H-theorem, namely, that under de-phasing dynamics entropy cannot decrease.  (iv) The $2$-entropy is conserved under discrete Moyal-bracket dynamics, mirroring conservation of von Neumann entropy under unitary evolution on Hilbert space.  We also comment on interpreting the R\'enyi order parameter as an inverse temperature.  Overall, we believe that our investigation provides good evidence that our axiomatically derived signed R\'enyi entropy may be a useful addition to existing entropy measures employed in quantum information, foundations, and thermodynamics.
\end{abstract}

\section{Introduction}
Phase-space formulations of quantum mechanics -- beginning with Wigner's quasi-probability distribution (Wigner, 1932) and including Feynman's comments on negative probability (Feynman, 1987) and the discrete constructions for qudits (Wootters, 1987; Gibbons, Hoffman, and Wootters, 2004) -- assign negative weights to certain states.  When a logarithm operatives on these negative quantities, the familiar Shannon or R\'enyi entropies become complex-valued and lose their operational meaning in terms of the amount of uncertainty about a system. This poses an important foundational question:

\begin{quote}
What axioms must an entropy satisfy to remain real-valued and physically informative when its input can be negative?
\end{quote}

A clue to the answer can be found in the R\'enyi (1961) axiomatization of entropy that generalized Shannon entropy (Shannon, 1948) and which R\'enyi later used in various problems in probability theory and information theory.  (Applications to ergodicity of Markov chains and the central limit theorem can be found in R\'enyi, 1970, pp.598-603.)  Since then, R\'enyi entropy has been used in many fields (Csisz\'ar, 2008).  Two instances are quantum information and quantum thermodynamics, which are the primary motivation for the current paper.  See Coles et al.~(2017) and Goold et al.~(2016) for comprehensive surveys.

The usual definition of entropy in quantum mechanics is von Neumann entropy (von Neumann, 1932), which is the natural analog to Shannon entropy.  One can also define the Hilbert-space analog to R\'enyi entropy, as in the works above.  Our goal is to find a natural definition on phase space.

In this paper, we return to the R\'enyi-Dar\'oczy axioms (R\'enyi, 1961; Dar\'oczy, 1963) and modify them so that entropy retains its natural meaning in the presence of negative measure.  We obtain a characterization of what we called signed R\'enyi $\alpha$-entropy:
\begin{equation} \label{eq1}
H_\alpha(P) = - \frac{1}{\alpha - 1} \log_2 \bigl[ \frac{\sum_{i=1}^n |p_i|^\alpha}{|\sum_{i=1}^n p_i|} \bigr],
\end{equation}
where $P = (p_1, \ldots, p_n)$ is a (real-valued) signed measure on phase space and $\alpha > 0$ with $\alpha \not= 1$ is a free parameter.  Clearly, Equation (1) reduces to the usual formula for R\'enyi $\alpha$-entropy when $P$ is an ordinary probability measure.

An immediate question is why we do not adopt the obvious signed analog to Shannon entropy:
\begin{equation} \label{eq2}
H_1(P) = - \sum_{i=1}^n |p_i| \log_2|p_i|.
\end{equation}
The answer is that this formula fails the basic physical requirement of extensivity: $H_1(P*Q) \not= H_1(P) + H_1(Q)$.  This requirement seems basic for both information and thermodynamic interpretations of entropy.  (Notice that, as must then be the case, $H_\alpha(P)$ does not yield signed Shannon entropy in the $\alpha \rightarrow 1$ limit.  In fact, it will typically diverge at $\alpha = 1$.)

We further justify our choice of entropy for signed phase space by showing that:
\begin{quote}
\textbf{Cancellation Witness} $H_\alpha(P)$ witnesses the presence of cancellation in a signed measure; \newline

\textbf{Schur-Concavity} $H_\alpha(P)$ is Schur-concave for all $\alpha > 1$; \newline

\textbf{Quantum H-Theorem} $H_\alpha(P)$ is non-decreasing, if $\alpha > 1$, under a de-phasing assumption on dynamics; \newline

\textbf{Unitarity Analog} $H_2(P)$ is preserved under discrete Moyal-bracket evolution, paralleling preservation of Hilbert-space entropies under unitary evolution.
\end{quote}
\vspace{0.1in}

Signed R\'enyi entropies already underpin an entropic characterization of the qubit via an uncertainty principle (Brandenburger, La Mura, and Zoble, 2022) and feature in recent analyses of quantum and superquantum correlations (e.g., Aw et al., 2023; Onggadinata, Kurzynski, and Kaszlikowski, 2023a, 2023b; Onggadinata et al., 2024).  The links between Hilbert-space R\'enyi entropy and thermodynamics, many of which are
surveyed in Goold et al., 2016), have been explored in detail by Baez (2022), Bakiev, Nakshidze, and Savchenko (2020), Jizba and Arimitsu (2004), Lu and Grover (2019), and St\'ephan, Misguich, and Pasquier (2010).  A recent R\'enyi-like measure of interest is the $\alpha$-logarithmic negativity (Wang and Wilde, 2020) defined on Hermitian operators such as partial transposes.  Ji et al.~(2024) connect relative R\'enyi entropy to the quantum state exclusion task.  Our quadratic case $H_2$ relates to the pioneering result of Manfredi and Feix (2000), who showed that the square of the Wigner function is invariant under Moyal evolution in the continuous phase-space setting.  By supplying an axiomatic foundation for R\'enyi entropy on phase space and establishing some preliminary properties, we hope to point the way to other uses of phase-space R\'enyi entropy in quantum information, foundations, and thermodynamics.

In Section 2, we review the classic R\'enyi (1961) axioms for ordinary measures.  In Section 3, we introduce our modification of the axiom set to encompass signed measures and state our main characterization theorem.  In Sections 4 and 5, we show how our signed R\'enyi entropy witnesses negativity and satisfies Schur concavity, and we give examples to show that the signed Shannon entropy candidate fails both properties.  Section 6 uses our formula to state and prove an abstract-level quantum H-theorem.  Section 7 establishes the baseline that our signed R\'enyi entropy is conserved under Moyal-bracket evolution on phase space, which is a necessary check that we have identified a well-behaved analog to von Neumann (or R\'enyi) entropy on Hilbert space.  Section 8 examines a renormalized version of our formula and looks briefly at interpreting the $\alpha$ parameter as an inverse temperature.  Section 9 concludes.

\section{Axioms for R\'enyi Entropy}
We follow the approach in R\'enyi (1961) by first axiomatizing entropy for the class of non-negative measures with total weight less than or equal to $1$ and then specializing to probability measures.  Fix a finite set of states $X = \{x_{1},\ldots,x_{n} \}$.  A generalized probability measure on $X$ is a tuple $P = (p_1, \ldots, p_n)$ where each $p_i \in \mathbb{R}_+$ and $\sum_{i=1}^n p_i \leq 1$.  The quantity $w(P) = \sum_{i=1}^n p_i$ is called the weight of $P$.  We assume $w(P) > 0$.

Given two generalized probability measures $P = (p_1, \ldots, p_m)$ and $Q = (q_1 ,\ldots, q_n)$, we denote by $P*Q$ the generalized probability measure which is the direct product:
\begin{equation} (p_1q_1,\ldots,p_1q_n,\ldots,p_mq_1,\ldots,p_mq_n).
\end{equation}
Also, we denote by $P\cup Q$ the generalized probability measure which is the (direct) sum:
\begin{equation}
(p_1, \ldots, p_m, q_1, \ldots, q_n),
\end{equation}
whenever it is well-defined, i.e., $\sum_ip_i + \sum_jq_j \leq 1$.  Finally, we write $(p)$ for the generalized probability measure consisting of the single real number $p > 0$.

The R\'enyi (1961) axioms are:

\begin{axiom}
\textbf{(Symmetry)} \,\, $H(P)$ is a symmetric function of the elements of $P$.
\end{axiom}

\begin{axiom}
\textbf{(Continuity)} \,\, $H((p))$ is continuous for all $0 < p \leq 1$.
\end{axiom}
    
\begin{axiom}
\textbf{(Calibration)} \,\, $H((\frac{1}{2})) = 1$.
\end{axiom}

\begin{axiom}
\textbf{(Extensivity)} \,\, $H(P * Q) = H(P) + H(Q)$.
\end{axiom}

\begin{axiom}
\textbf{(Mean-Value Property)} \,\, There is a strictly monotone and continuous function $g:\mathbb{R}\rightarrow\mathbb{R}$ such that for any 
$P,Q$, whenever $H(P\cup Q)$ is well-defined:
\begin{equation}
H(P\cup Q)=g^{-1}\big[\frac{w(P)g(H(P))+w(Q)g(H(Q))}{w(P\cup Q)}\big].
\end{equation}
\end{axiom}

R\'enyi (1961, Theorem 2) proved that if the function $g$ in Axiom 5 takes the form $g(x) = - dx + e$, for constants $d \not = 0$ and $e$, then Axioms 1-5 characterize the entropy functional:
\begin{equation}~\label{shannon}
H_1(P) = - \frac{\sum_{i=1}^n p_i \log_2 p_i}{\sum_{i=1}^n p_i},
\end{equation}
while, if $g(x) = - d2^{(1 -\alpha)x} + e$ for $\alpha > 0$ with $\alpha \not = 1$, then Axioms 1-5 characterize the family of entropy functionals:
\begin{equation}~\label{renyi}
H_\alpha(P) = - \frac{1}{\alpha - 1} \log_2 \big[\frac{\sum_{i=1}^n p_i^\alpha}{\sum_{i=1}^n p_i}\big].
\end{equation}
The entropies in Equations~\ref{shannon} and~\ref{renyi} are the familiar Shannon and R\'enyi entropies, defined for generalized probability measures.  As is well known, they can be related via an application of l'H\^opital's rule:
\begin{equation}~\label{hopital}
\lim_{\alpha \rightarrow 1} H_\alpha(Q) = H_1(Q).
\end{equation}

Dar\'oczy (1963) proved that Axioms 1-5 are, in fact, fully characterized by the two functional forms $g(x) = - dx + e$ and $g(x) = - d2^{(1 -\alpha)x} + e$ with the preceding parameter restrictions on $d, e$, and $\alpha$.  (He also observed that Axiom 1 is not needed, because it follows from Axiom 5.)

\section{Extension to Signed Measures}
A signed measure $P$ on $X$ is a tuple $P = (p_1, \ldots, p_n)$ where each $p_i \in \mathbb{R}$. Thus, the components $p_i$ are no longer required to be non-negative.  We impose $w(P) \neq 0$ but not $w(P) = 1$ (except when $P$ is a signed probability measure). The notations $P * Q$ and $P \cup Q$ have the same meanings as in the previous section and the latter is well-defined when $\sum_i p_i + \sum_j q_j \neq 0$. We now state our modified set of axioms.
\vspace{0.1in}

\noindent\textbf{Axiom 0. (Real-Valuedness)} \textit{$H(P)$ is a finite real number.}
\vspace{0.1in}

\noindent\textbf{Axiom 2$^\prime$. (Continuity)} \textit{$H((p))$ is continuous for all $p \neq 0$.}
\vspace{0.1in}

\noindent\textbf{Axiom 3. (Calibration)} \textit{$H\left(\left(\frac{1}{2}\right)\right) = 1$.}
\vspace{0.1in}

\noindent\textbf{Axiom 4. (Extensivity)} \textit{$H(P * Q) = H(P) + H(Q)$.}
\vspace{0.1in}

\noindent\textbf{Axiom 5$^\prime$. (Mean-Value Property)}  \textit{There is a continuous, strictly monotone function $g : \mathbb{R} \to \mathbb{R}$ such that any $P,Q$ whenever $w(P \cup Q)\neq 0$ is well-defined:}
\begin{equation} \label{eq:renyi}
H(P \cup Q) = g^{-1}\bigl[\frac{|w(P)| g(H(P)) + |w(Q)| g(H(Q))} {|w(P) + w(Q)|}
\bigr].
\end{equation}

Some comments on the axioms.  Axiom 0 ensures that entropy, as a measure of uncertainty or lack of information, is real-valued, which is critical for signed measures where standard entropy formulas may yield complex values due to negative components.  Axiom 2$^\prime$ extends Axiom 2 to signed measures, excluding $p = 0$.  Axioms 3 and 4 are as in R\'enyi (1961), stated for signed measures.  Axiom 5 deserves deserves some discussion.  It reduces to the R\'enyi mean-value axiom when $P, Q$ are non-negative measures.  In this case, the coefficients in the mean-value formula are the subsystem weights $w(P)$ and $w(Q)$, which represent how much of each subsystem is being mixed.  In the signed case, the quantities $w(P)$, $w(Q)$ and $w(P \cup Q)$ are not directly weights.  For example, a negative value of $w(P)$ does not represent a ``negative number of subsystems," but, rather, a contribution that cancels against other contributions.  Thus, the physically meaningful measure of the size of a signed subsystem is $|w(P)|$, and similarly for $Q$, and these are the quantities that appear in the numerator of Equation \ref{eq:renyi}.  But cancellation are key physically when systems are composed as $P \cup Q$, which is why the denominator first forms the sum $w(P) + w(Q)$ and then takes the absolute value.  (If we replaced the denominator by $|w(P)| + |w(Q)|$, we would lose much of architecture of the paper -- in particular, our H-theorem, and invariance under discrete Moyal evolution would both break.)

\begin{theorem} \label{th1}
Axioms 0, 2$^\prime$, 3, 4, and 5$^\prime$ hold if and only if $H(P)$ is given by:
\begin{equation} \label{eq10}
H_\alpha(P) = -\frac{1}{\alpha - 1} \log_2 \bigl[ \frac{\sum_{i=1}^n |p_i|^\alpha}{\left| \sum_{i=1}^n p_i \right|} \bigr],
\end{equation}
where $\alpha \in \mathbb{R}$ is a free parameter with $\alpha > 0$ and $\alpha \neq 1$.  The case $\alpha = 0$ is also admissible under the convention $\sum_i |p_i|^0 = \#\{i : p_i \not= 0\}$ (so we obtain a signed Hartley entropy).
\end{theorem}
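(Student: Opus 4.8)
The plan is to treat the R\'enyi (1961)--Dar\'oczy (1963) classification as a black box and reduce the signed problem to it: the continuity, calibration and extensivity axioms pin down $H$ on one-point measures and recast the mean-value property as a Kolmogorov--Nagumo functional equation, while real-valuedness (Axiom 0) and extensivity (Axiom 4) select one branch of the resulting short list.

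For the ``if'' direction I would simply check the five axioms against the formula. Real-valuedness and non-constancy are immediate since $\sum_i|p_i|^\alpha>0$ and $|\sum_i p_i|>0$; continuity on singletons follows from $H^\pm_\alpha((p))=-\log_2|p|$; calibration is $H^\pm_\alpha((\tfrac12))=-\log_2\tfrac12=1$. Extensivity comes from the multiplicativity identities $\sum_{i,j}|p_iq_j|^\alpha=\big(\sum_i|p_i|^\alpha\big)\big(\sum_j|q_j|^\alpha\big)$ and $\big|\sum_{i,j}p_iq_j\big|=\big|\sum_i p_i\big|\cdot\big|\sum_j q_j\big|$, which make the logarithm of the product-ratio split additively. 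For the mean-value property I would exhibit the witness $g(x)=2^{(1-\alpha)x}$: it is strictly monotone and continuous, $g(H^\pm_\alpha(P))=\sum_i|p_i|^\alpha/|w(P)|$, and the weighted combination on the right of Axiom 5 collapses --- the weight attached to $P$ cancels against the normalization inside $g(H^\pm_\alpha(P))$ --- to $\sum_{P\cup Q}|\cdot|^\alpha/|w(P\cup Q)|=g(H^\pm_\alpha(P\cup Q))$.

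For the ``only if'' direction I would argue in four steps. \emph{Singletons:} Extensivity on one-point measures gives $f(pq)=f(p)+f(q)$ for $f(p):=H((p))$, $p,q\neq0$; this multiplicative Cauchy equation plus Axiom 2$'$, together with $f(1)=f(-1)=0$ (so $f(p)=f(|p|)$), yields $f(p)=c\log_2|p|$, and calibration fixes $c=-1$. \emph{Mean form:} The right-hand side of Axiom 5 is symmetric in $P,Q$, so $H$ is permutation-invariant; hence any $P$ with $w(P)\neq0$ can be reordered with all partial sums nonzero (put the positive components first and the negative ones last when $w(P)>0$, the reverse otherwise), and iterating Axiom 5 along this order gives $g(H(P))=\big(\sum_i |p_i|\,g(-\log_2|p_i|)\big)/|w(P)|$. \emph{Solving for $g$:} Substituting the singleton formula and then imposing Extensivity on products $P\ast Q$ reproduces exactly the functional equation solved by R\'enyi and Dar\'oczy, so $g$ is affine, $g(x)=-dx+e$, or exponential, $g(x)=-d\,2^{(1-\alpha)x}+e$, with $d\neq0$, $\alpha>0$, $\alpha\neq1$. \emph{Selecting the family:} Feeding each candidate back into the mean form, the exponential branch with $e=0$ returns exactly $H^\pm_\alpha$, while the affine branch and the $e\neq0$ exponential branch carry an extra term depending on $\big(\sum_i|p_i|-|w(P)|\big)$; a second application of Extensivity --- say to the product of a sign-indefinite $P$ with a non-negative $Q$ --- breaks extensivity for those branches, and Axiom 0 discards any surviving choice whose range would force a complex value. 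Only $H^\pm_\alpha$ remains.

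The main obstacle will be the mean-form step: because the weights in Axiom 5 are signed, one has to (a) guarantee that some ordering of the components keeps every partial union well-defined --- the ``positives-then-negatives'' trick is exactly what does this --- and (b) keep the normalization honest, since for sign-indefinite measures the total weight and the total variation separate, and it is their interplay that distinguishes $H^\pm_\alpha$ from the near-misses the Dar\'oczy classification throws up. A secondary subtlety is justifying that the Dar\'oczy list of admissible $g$ still applies with negative weights present; I would handle this by first running the mean-form identity and the extensivity equation on non-negative inputs only --- where R\'enyi's hypotheses hold verbatim --- and then extending the resulting $g$ to all of $\mathbb{R}$ by continuity.
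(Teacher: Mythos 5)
Your proposal follows essentially the same route as the paper's Appendix: the singleton Cauchy equation giving $H((p))=-\log_2|p|$, iteration of the mean-value axiom to express $g(H(P))$ as a weighted mean of $g(-\log_2|p_i|)$, reduction to the R\'enyi--Dar\'oczy classification of $g$ (run on non-negative inputs, as the paper also does via Hardy--Littlewood--P\'olya), and then a sign-indefinite product example together with Axiom 0 to kill the linear branch, force $e=0$, and restrict to $\alpha>0$, $\alpha\neq 1$. The differences are only presentational (e.g., your explicit positives-then-negatives ordering to keep intermediate unions well-defined is a care the paper leaves implicit), so the proposal is correct in the same sense and to the same degree of rigor as the paper's own proof.
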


The proof of Theorem \ref{th1} is in the Appendix.  Here, we make some observations.  An obvious signed analog to ordinary Shannon entropy is:
\begin{equation} \label{eq11}
H_1(P) = -\frac{\sum_{i=1}^n |p_i| \log_2 |p_i|}{| \sum_{i=1}^n p_i|},
\end{equation}
defined when $w(P) \not= 0$.  However, this formula violates the Extensivity axiom.  (From now on, all logs unless otherwise noted will be  to base $2$.)

\begin{Ex} \label{ex1}
Let $P$ and $Q$ be the same signed probability measure $(2, -1)$.  Then:
\begin{equation}
H_1(P) = H_1(Q) = -(2 \log 2 + 1 \log 1) = -2.
\end{equation}
For the product measure $P*Q = (4, -2, -2, 1)$:
\begin{equation}
H_1(P*Q) = -(4\log 4 + 2 \cdot 2\log 2 + 1 \log1) = -12 \not= -4 = H_1(P) + H_1(Q),
\end{equation}
showing that Axiom 4 does not hold.  By contrast, for signed R\'enyi $\alpha$-entropy:
\begin{equation}
H_\alpha(P) = H_\alpha(Q) = - \frac{1}{\alpha - 1}\log(2^\alpha + 1),
\end{equation}
and:
\begin{equation}
H_\alpha(P*Q) = - \frac{1}{\alpha - 1} \log(4^\alpha + 2 \cdot 2^\alpha + 1) = - \frac{1}{\alpha - 1} \log\bigl[(2^\alpha + 1)^2\bigr],
\end{equation}
consistent with Axiom 4.
\end{Ex}

We see the failure of signed Shannon entropy to obey Extensivity as significant evidence against its value as an entropy measure -- and as strong evidence in favor of the value of signed R\'enyi entropy.  Nevertheless, we will hold signed Shannon entropy in mind as we explore various properties of signed R\'enyi entropy, so that we can compare their performance more broadly.

Note from Example \ref{ex1} that signed Shannon entropy (Equation \ref{eq11}) cannot arise as the $\alpha \rightarrow 1$ limit of signed R\'enyi entropy $H_\alpha(P)$.  Indeed, it can be seen from Equation \ref{eq10} that signed R\'enyi entropy will generally diverge at $\alpha = 1$.  We do not see this as a shortcoming of our formula, which explicitly rules out $\alpha = 1$.  (But see Section 8.)

A closely related paper to this one is Koukouledikis and Jennings (2022).  These authors do not take an axiomatic approach, but, instead, directly assume that $\alpha = 2a / (2b - 1)$, where $a, b$ are positive integers with $a \geq b$, so that $p_i^\alpha$ is non-negative real-valued.  Our entropy formula is defined for all $\alpha > 0$, but it coincides with the Koukouledikis-Jennings one when $\alpha$ is restricted as in their paper.  We also treat unnormalized measures.  In the next two sections, we parallel Koukouledikis and Jennings (2022) in examining how non-classicality of probabilities is witnessed and in examining Schur-concavity properties.

\section{Witnessing Cancellation}
In this section, we show that signed R\'enyi entropy as just defined witnesses the presence of cancellation in signed measures.  This result is related to one on witnessing negativity in Koukouledikis and Jennings (2020, Theorem 10).  As we will see, signed Shannon entropy does not witness cancellation -- at least, not under our approach.  Given a signed measure $P = (p_1, \dots, p_n)$, write the total variation as $A_1(P) = \sum_i |p_i|$, and define the negativity:
\begin{equation} \label{eq:negativity}
N(P) = A_1(P) - |w(P)|.
\end{equation}
We have $N(P) \ge 0$ by the triangle inequality.  Note that our cancellation witness is invariant under a global sign flip $P \to -P$.  Indeed, $N(P) = 0$ whenever all entries of $P$ have the same sign, even if every $p_i$ is strictly negative.  Thus $N(P) > 0$ detects not the presence of a negative component per se, but the presence of both positive and negative mass -- that is, genuine cancellation structure in the signed measure.  Our use of the term ``negativity" should be understood in this sense.

\begin{theorem} \label{th2}
The signed measure $P$ exhibits negativity, i.e., $N(P) > 0$, if and only if there is an $\varepsilon > 0$ such that:
\begin{equation} \label{eq:negativity}
H_\alpha(P) < -\log_2|w(P)|,
\end{equation}
for all $\alpha \in(1, 1 + \varepsilon]$.
\end{theorem}

\begin{proof}
For $\alpha > 1$, the inequality $H_\alpha(P) < - \log_2 |w|$ is equivalent to $S_\alpha > |w|^\alpha$, where $S_\alpha = \sum_i |p_i|^\alpha$.  Define $f(\alpha) = S_\alpha - |w|^\alpha$.  Since each term $|p_i|^\alpha$ and $|w|^\alpha$ is continuous in $\alpha$, the function $f$ extends continuously to $[1,\infty)$.  Moreover:
\begin{equation}
f(1) = S_1 - |w| = \sum_i |p_i| - |w| = N(P) > 0.
\end{equation}
By continuity, there is an $\varepsilon > 0$ such that $f(\alpha)>0$ for all $\alpha \in (1, 1 + \varepsilon]$, which is equivalent to $H_\alpha(P) < - \log_2 |w|$ on this interval.

Conversely, suppose there is an $\varepsilon > 0$ such that Inequality \ref{eq:negativity} holds for all $\alpha \in(1, 1 + \varepsilon]$.  As before, this inequality is equivalent to $S_\alpha > |w(P)|^\alpha$.  Letting $\alpha \downarrow 1$ and using continuity of $S_\alpha$ and $|w|^\alpha$ in $\alpha$:
\begin{equation}
S_1 = \sum_i |p_i| \ge |w|^1 = |w|.
\end{equation}
If $S_1 = |w|$, then all $p_i$ have the same sign, and we may write $s_i = |p_i| \ge 0$ with $\sum_i s_i = |w|$.  Define $\lambda_i = s_i/|w|$, so that $\sum_i \lambda_i = 1$.  For $\alpha > 1$, the function $x \mapsto x^\alpha$ is strictly convex, and therefore:
\begin{equation}
\sum_i s_i^\alpha = |w|^\alpha \, \sum_i \lambda_i^\alpha \le |w|^\alpha \, \sum_i \lambda_i = |w|^\alpha,
\end{equation}
with equality only if exactly one $\lambda_i = 1$ and all others vanish, that is, only if $P$ has a single nonzero component.  This contradicts $S_\alpha > |w|^\alpha$ for $\alpha > 1$.  Therefore, $S_1 > |w|$, and so $N(P) > 0$, as required.
\end{proof}

The next example shows that signed Shannon entropy does not witness cancellation the same way.

\begin{Ex}~\label{ex2}
Consider the signed probability measure $P = (2, - \frac{1}{n}, \ldots, - \frac{1}{n})$ where the term $- \frac{1}{n}$ is repeated $n$ times.  Its total weight is $w(P) = 2 - n\cdot \frac{1}{n} = 1$, while its total variation is $\sum_i |p_i| = 2 + 1 = 3$, so that $N(P) = 2 > 0$.  Thus $P$ exhibits cancellation in the sense of Theorem \ref{th2}.

Signed Shannon entropy gives:
\begin{equation}~\label{shannon-no-witness}
H_1(P) = - 2 \log2 - n \frac{1}{n} \log(\frac{1}{n}) = \log n - 2,
\end{equation}
which is strictly positive as long as $n > 4$, and therefore does not detect the presence of cancellation in the sense of Theorem \ref{th2}.

By contrast, for signed R\'enyi entropy we find, for small $\varepsilon > 0$:
\begin{equation}~\label{renyi-yes-witness}
H_{1 + \varepsilon}(P) = - \frac{1}{\varepsilon} \log \bigl[2^{1 + \varepsilon} + n \times (\frac{1}{n})^{1 + \varepsilon}\bigr] \approx - \frac{1}{\varepsilon} \log 3 < 0
\end{equation}
Since $- \log_2 |w(P)| = 0$, this verifies the inequality
\begin{equation}
H_{1 + \varepsilon}(P) < -\log_2 |w(P)|,
\end{equation}
required by Theorem \ref{th2}.  Hence signed R\'enyi entropy correctly witnesses cancellation in this example.
\end{Ex}

\section{Schur-Concavity}
We now turn to majorization, that is, the notion that one measure is ``more uniform" than another one.  This is a basic test for any candidate for an entropy measure.  Recall that a function $f : \cal D \rightarrow \mathbb{R}$, where ${\cal D} \subseteq \mathbb{R}^n$, is Schur-concave on $\cal D$ if $f(P) \geq f(Q)$ whenever $P \in \cal D$ is majorized by $Q \in \cal D$, that is:
\begin{align}
\sum_{i = 1}^m p_{[i]} &\leq \sum_{i = 1}^m q_{[i]} \,\, \text{for all} \,\, m \leq n - 1, ~\label{majorization-a} \\
\sum_{i = 1}^n p_{[i]} &= \sum_{i = 1}^n q_{[i]}, ~\label{majorization-b}
\end{align}
where $(p_{[1]}, p_{[2]}, \ldots, p_{[n]})$ is a decreasing rearrangement of $P$, and similarly for $Q$.

\begin{theorem}~\label{th3}
If $\alpha > 1$, then signed R\'enyi entropy $H_\alpha$ is Schur-concave on $\{P \in \mathbb{R}^n : \sum_i p_i \not= 0\}$.
\end{theorem}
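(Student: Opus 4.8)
The plan is to exploit the fact that majorization preserves the total sum, so the denominator of $H^\pm_\alpha$ in Equation~\ref{eq10} is common to $P$ and to any $Q$ that majorizes it, and then to reduce the claim to the classical statement that a separable sum of a convex function is Schur-convex. First I would fix $P$ majorized by $Q$, both in the domain $\{P\in\mathbb{R}^n:\sum_i p_i\neq 0\}$. Condition~\ref{majorization-b} gives $\sum_{i=1}^n p_i=\sum_{i=1}^n q_i$, so $\bigl|\sum_i p_i\bigr|=\bigl|\sum_i q_i\bigr|\neq 0$, and the two denominators agree; writing $c:=\bigl|\sum_i p_i\bigr|$,
\[
H^\pm_\alpha(P)-H^\pm_\alpha(Q)=-\frac{1}{\alpha-1}\Bigl(\log_2\tfrac{\sum_i|p_i|^\alpha}{c}-\log_2\tfrac{\sum_i|q_i|^\alpha}{c}\Bigr)=-\frac{1}{\alpha-1}\log_2\frac{\sum_i|p_i|^\alpha}{\sum_i|q_i|^\alpha}.
\]
Since $\alpha>1$, the factor $-1/(\alpha-1)$ is negative and $\log_2$ is increasing, so $H^\pm_\alpha(P)\geq H^\pm_\alpha(Q)$ is equivalent to $\sum_{i=1}^n|p_i|^\alpha\leq\sum_{i=1}^n|q_i|^\alpha$. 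Hence it suffices to prove that $P\mapsto\sum_{i=1}^n|p_i|^\alpha$ is Schur-convex on $\mathbb{R}^n$.

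For that step I would invoke the standard majorization criterion (Hardy--Littlewood--P\'olya; see Marshall--Olkin--Arnold): if $\phi:\mathbb{R}\to\mathbb{R}$ is convex then $x\mapsto\sum_{i=1}^n\phi(x_i)$ is Schur-convex. Applying it with $\phi(t)=|t|^\alpha$, which is convex on all of $\mathbb{R}$ exactly because $\alpha\geq 1$ (its derivative $t\mapsto\alpha|t|^{\alpha-1}\operatorname{sgn}(t)$ is nondecreasing), gives the Schur-convexity of $\sum_i|p_i|^\alpha$ and therefore the theorem. Alternatively, and self-containedly, I would verify the Schur--Ostrowski criterion for the symmetric function $f(P)=\sum_i|p_i|^\alpha$ directly: for $i\neq j$,
\[
(p_i-p_j)\bigl(\partial_i f-\partial_j f\bigr)=\alpha(p_i-p_j)\bigl(|p_i|^{\alpha-1}\operatorname{sgn}(p_i)-|p_j|^{\alpha-1}\operatorname{sgn}(p_j)\bigr)\geq 0
\]
by monotonicity of $t\mapsto|t|^{\alpha-1}\operatorname{sgn}(t)$, which again yields Schur-convexity.

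I do not expect a serious obstacle: all the content is in the observation that the equality constraint~\ref{majorization-b} built into majorization cancels the denominator, so the signed entropy inherits Schur-concavity from the (unsigned) Schur-convexity of $P\mapsto\sum_i|p_i|^\alpha$. The only points needing care are, first, tracking the sign reversal from the prefactor $-1/(\alpha-1)$ -- this is precisely why the hypothesis is $\alpha>1$ rather than $\alpha<1$ (for $0<\alpha<1$ the map $|t|^\alpha$ is concave on each half-line and the conclusion in fact reverses, a contrast worth a remark) -- and, second, confirming the domain restriction is harmless: any majorization comparison stays in a hyperplane $\sum_i p_i=\mathrm{const}$, which lies entirely inside $\{\sum_i p_i\neq 0\}$ when the constant is nonzero, while the Schur-convexity of $\sum_i|p_i|^\alpha$ holds on all of $\mathbb{R}^n$ in any case.
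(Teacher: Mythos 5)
Your proposal is correct and follows essentially the same route as the paper's proof: both reduce the claim to Schur-convexity of $P \mapsto \sum_i |p_i|^\alpha$ via convexity of $t \mapsto |t|^\alpha$ for $\alpha > 1$ (the Marshall--Olkin--Arnold criterion), note that majorization fixes the common total $\sum_i p_i = \sum_i q_i \neq 0$ so the denominator is shared, and then apply the strictly decreasing map $x \mapsto -\tfrac{1}{\alpha-1}\log_2(x/w)$ to reverse the inequality. Your added Schur--Ostrowski verification and the remark on the $0 < \alpha < 1$ case are harmless extras, not a different method.
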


\begin{proof}
We first consider $P, Q \in \mathbb{R}^n$ without imposing the restrictions $\sum_i p_i \not=0$ and $\sum_i q_i \not=0$.  Since $|p_i|^\alpha$ is convex for all $\alpha > 1$, it follows from Marshall, Olkin, and Arnold (2011, Proposition C.1, p.92) that $\sum_i |p_i|^\alpha$ is Schur-convex on $\mathbb{R}^n$.  Therefore, if $P$ is majorized by $Q$:
\begin{equation} ~\label{schur-proof}
\sum_{i=1}^n |p_i|^\alpha \leq \sum_{i=1}^n |q_i|^\alpha.
\end{equation}
Now assume $\sum_i p_i = \sum_i q_i = w \not= 0$.  The map $x \mapsto - 1/(\alpha - 1) \times \log(x/w)$ is strictly decreasing for $\alpha > 1$.  Applying it to both sides of Equation \ref{schur-proof} reverses the inequality, yielding $H_\alpha(P) \geq H_\alpha(Q)$.  This establishes that $H_\alpha$ is Schur-concave on $\{P \in \mathbb{R}^n : \sum_i p_i  = w \not= 0\}$.
\end{proof}

Koukouledikis and Jennings (2022, Theorem 9) state that R\'enyi entropy for the parametric family $\alpha = 2a/(2b - 1)$, where $a, b$ are positive integers with $a \geq b$, is Schur-concave for signed probability measures.  Our Theorem \ref{th3} extends the domain of their result.

\begin{Ex} \label{ex3}
Consider the signed probability measures $P = (0.08, 0.45, 0.47)$ and $Q =(-0.3, 0.6, 0.7)$.  Then $P$ is majorized by $Q$.  The corresponding signed Shannon entropies are:
\begin{align}
H_1(P) &= - (0.08 \log 0.08 + 0.45 \log 0.45 + 0.47 \log 0.47) \approx 1.3219, \\
H_1(Q) &= - (0.3 \log 0.3 + 0.6 \log 0.6 + 0.7 \log 0.7) \approx 1.3235,
\end{align}
so that $H_1(P) < H_1(Q)$, contradicting Schur-concavity.  By contrast, we find the signed R\'enyi $2$-entropies satisfy $H_2(P) \approx 1.2183$ and $H_2(Q) \approx 0.0893$, in line with Theorem \ref{th3}.
\end{Ex}

\section{A Quantum H-Theorem}
We now make use of our concept of signed R\'enyi entropy is to obtain a quantum H-theorem.  The puzzle of what, given unitarity, is the appropriate quantum analog to the classical Boltzmann H-theorem has a long history.  Pauli (1928) developed a time-irreversible quantum master equation by assuming that off-diagonal elements of the density matrix vanish over time owing to random-phase cancellation (Cabrera, 2017).  Von Neumann (1929) located entropy increase in the act of measurement, which introduces classical uncertainty.  Modern approaches mostly make use of entanglement between a system and an environment to bring about increasing entropy, but, to make the issue meaningful, the system should be ``quasi-isolated" (Lesovik et al., 2016).  More precisely, the time scale is long enough to allow de-phasing of off-diagonal elements of the density matrix, but short enough to ensure negligible energy exchange with the environment (Lesovik et al., op.cit.).

In this section, we formulate an abstract H-theorem for signed measures.  Our theorem can be thought of as encompassing, for example, the Wigner quasi-probability representation of quantum systems on phase space (Wigner, 1932) and other phase-space representations.  (See Rundle and Everitt, 2021 for a recent survey.)  More precisely, it will cover general evolution of any no-signaling system (Popescu and Rohrlich, 1994).  This follows from Abramsky and Brandenburger (2011, Theorem 5.9), who prove an equivalence between no-signaling systems and signed phase-space representations.  Also relevant here is van de Wetering (2018), who establishes quantum theory as a subcategory of the category of quasi-stochastic processes.

To proceed, fix again a finite set of states $X = \{x_1, \ldots, x_n\}$, to be thought of as the set of possible phase states of a physical system.  We allow for signed measures $P = (p_1, \ldots, p_n)$ on states in $X$.  We next introduce transition rates $\lambda_{ij}$ between states $x_i$ and $x_j$, for $i, j = 1, \ldots, n$.  These rates are defined as limits:
\begin{equation}~\label{transition-rate}
\lambda_{ij} = \lim_{\Delta t \rightarrow 0} \frac{{\rm Pr}(x_j \,\, \text {at time} \,\, t + \Delta t \, | \, x_i \,\, \text {at time} \,\, t) - \delta_{ij}}{\Delta t},
\end{equation}
where $\delta_{ij}$ is the Dirac delta function: $\delta_{ij} = 1$ if $i = j$ and $0$ otherwise.  As is customary, this limit is assumed to be well-defined (see, e.g., Whittle, 2000, p.183-184 for a discussion).  Observe that $\sum_{j=1}^n \lambda_{ij} = 0$ for each $i$, and, for a classical system, $\lambda_{ij} \geq 0$ for $i \not= j$.

To allow for fully general evolution of a signed measure on phase space, we first write down a master equation where $\lambda_{ij} < 0$ is permitted.  Call these signed transition rates.  We impose micro-reversibility (Whittle, 2000, p.199): $\lambda_{ij} = \lambda_{ji}$ for all $i, j$.  This assumption can be justified in various ways.  In quantum mechanics, symmetry of transition rates is a consequence of Hermiticity of the Hamiltonian (Schwartz, 2021, Section 3.4).  Our general master equation is:
\begin{align}
\frac{d p_i}{d t} &= \sum_{j=1}^n \lambda_{ji} p_j(t) \\
&= \sum_{j=1}^n \lambda_{ji} p_j(t) - \sum_{j=1}^n \lambda_{ij} p_i(t) \\
&= \sum_{j=1}^n \lambda_{ij} \bigl[ p_j(t) - p_i(t)\bigr], ~\label{master-equation-c}
\end{align}
for $i = 1, \ldots, n$.  This is standard except that $P$ is a signed measure, and the matrix $\Lambda$ of transition rates may have negative off-diagonal entries.

The goal now is to identify a (minimal) condition on our master equation that yields a Second Law for signed R\'enyi entropy.  The next theorem shows that such a condition is classicality of the evolution of the phase-space measure.  Of course, this does not say that that phase-space measures must be non-negative, only that their evolution is ruled by non-negative probabilities.  We state our result and then add some interpretation.

For our result, we consider a domain $\cal D$ in $\{ P \in \mathbb{R}^n : w(P) \not= 0\}$ that is closed under permutations, is convex, and has a non-empty interior.  This will enable us to appeal to a differential characterization of Schur-concavity. 

\begin{theorem}~\label{th4}
Assume the off-diagonal entries in the matrix $\Lambda$ of (signed) transition rates are non-negative -- that is, assume classical evolution.  Then signed R\'enyi $\alpha$-entropy, for $\alpha > 1$, is non-decreasing on $\cal D$, that is: $d H_\alpha/d t \geq 0$.
\end{theorem}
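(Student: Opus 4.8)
The plan is to differentiate $H^\pm_\alpha$ directly along solutions of the master equation~\eqref{master-equation-c}, rewrite the derivative as a sum over unordered pairs using micro-reversibility, and read off its sign from the differential (Schur--Ostrowski) characterization of the Schur-concavity already established in Theorem~\ref{th3}.

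First I would record two preliminaries. The total weight is conserved along trajectories: differentiating $w(P(t)) = \sum_i p_i(t)$ and using $\lambda_{ij} = \lambda_{ji}$ together with $\sum_j \lambda_{ij} = 0$, the double sum $\sum_{i,j}\lambda_{ij}\bigl[p_j(t) - p_i(t)\bigr]$ collapses to zero, so $|w(P(t))|$ is a non-zero constant on any trajectory lying in $\cal D$. Consequently $H^\pm_\alpha$ is well-defined along such trajectories; moreover, since $x \mapsto |x|^\alpha$ is continuously differentiable for $\alpha > 1$ (its derivative $\alpha\,\mathrm{sgn}(x)\,|x|^{\alpha-1}$ is continuous and vanishes at $x = 0$), and $\sum_k |p_k|^\alpha > 0$ whenever $w(P) \neq 0$, the function $H^\pm_\alpha$ is $C^1$ on $\{P : w(P) \neq 0\}$. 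Because $\cal D$ is permutation-invariant, convex, and has non-empty interior, and $H^\pm_\alpha$ is Schur-concave on $\cal D$ by Theorem~\ref{th3}, the Schur--Ostrowski criterion (Marshall, Olkin, and Arnold, 2011) applies and yields, for all $i, j$ and all $P$ in the interior of $\cal D$ (hence on all of $\cal D$ by continuity of the gradient),
\begin{equation}
(p_i - p_j)\Bigl(\frac{\partial H^\pm_\alpha}{\partial p_i} - \frac{\partial H^\pm_\alpha}{\partial p_j}\Bigr) \;\leq\; 0.
\end{equation}

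Next I would compute, by the chain rule and~\eqref{master-equation-c},
\begin{equation}
\frac{d H^\pm_\alpha}{d t} \;=\; \sum_{i} \frac{\partial H^\pm_\alpha}{\partial p_i}\,\frac{d p_i}{d t} \;=\; \sum_{i,j} \lambda_{ij}\,(p_j - p_i)\,\frac{\partial H^\pm_\alpha}{\partial p_i}.
\end{equation}
Relabelling $i \leftrightarrow j$ and using $\lambda_{ij} = \lambda_{ji}$, the same quantity also equals $\sum_{i,j}\lambda_{ij}(p_i - p_j)\,\partial H^\pm_\alpha/\partial p_j$; averaging the two forms gives the symmetrized identity
\begin{equation}
\frac{d H^\pm_\alpha}{d t} \;=\; -\frac{1}{2}\sum_{i,j} \lambda_{ij}\,(p_i - p_j)\Bigl(\frac{\partial H^\pm_\alpha}{\partial p_i} - \frac{\partial H^\pm_\alpha}{\partial p_j}\Bigr).
\end{equation}
The conclusion is then immediate: for $i \neq j$ classical evolution gives $\lambda_{ij} \geq 0$, while $(p_i - p_j)\bigl(\partial H^\pm_\alpha/\partial p_i - \partial H^\pm_\alpha/\partial p_j\bigr) \leq 0$ by the Schur--Ostrowski inequality, so every such summand is $\leq 0$; the $i = j$ terms vanish. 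Hence $d H^\pm_\alpha/d t \geq 0$ on $\cal D$.

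The step I expect to be the main obstacle is the technical bookkeeping at the coordinate hyperplanes $\{p_i = 0\}$: one must confirm that $H^\pm_\alpha$ is honestly $C^1$ there (this uses $\alpha > 1$ essentially — the claim fails for $0 < \alpha < 1$, where $|x|^\alpha$ is not even Lipschitz at $0$), and that the differential Schur-concavity criterion, proved on the interior of $\cal D$, extends to the boundary by continuity of the gradient. Everything else — weight conservation, the symmetrization via micro-reversibility, and the final sign count — is routine. As a cross-check that avoids differentiating $H^\pm_\alpha$ altogether, one may observe that under classical evolution $\Lambda$ is a symmetric generator with non-negative off-diagonal entries and zero row sums, so $e^{(t-s)\Lambda}$ is doubly stochastic; then $P(t) = e^{(t-s)\Lambda} P(s)$ is majorized by $P(s)$ whenever $t \geq s$ (Hardy--Littlewood--P\'olya), and Theorem~\ref{th3} gives $H^\pm_\alpha(P(t)) \geq H^\pm_\alpha(P(s))$ directly.
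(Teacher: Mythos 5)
Your proof is correct and follows essentially the same route as the paper's: differentiate $H^\pm_\alpha$ along the master equation, symmetrize the double sum using $\lambda_{ij}=\lambda_{ji}$ to get the $-\tfrac{1}{2}\sum_{i,j}\lambda_{ij}(p_i-p_j)\bigl(\partial H^\pm_\alpha/\partial p_i-\partial H^\pm_\alpha/\partial p_j\bigr)$ form, and apply the Schur--Ostrowski differential criterion from Marshall--Olkin--Arnold to the Schur-concavity of Theorem~\ref{th3}, including the same key observation that $\alpha>1$ makes $H^\pm_\alpha$ continuously differentiable across the hyperplanes $\{p_i=0\}$. Your closing cross-check via the doubly stochastic semigroup $e^{(t-s)\Lambda}$ and Hardy--Littlewood--P\'olya majorization is a genuine extra not in the paper, and in fact gives the monotonicity globally from Theorem~\ref{th3} without differentiating at all.
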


\begin{proof}
We find the time derivative of signed R\'enyi entropy:
\begin{equation}~\label{time-derivative-a}
\frac{d H_\alpha}{d t} = \sum_{i=1}^n \frac{\partial H_\alpha}{\partial p_i} \frac{d p_i}{d t}
= \sum_{i=1}^n \frac{\partial H_\alpha}{\partial p_i} \sum_{j=1}^n \lambda_{ij} \bigl[p_j(t) - p_i(t)\bigr],
\end{equation}
where we have substituted in Equation~\ref{master-equation-c}.  Interchanging $i$ and $j$, we can also write:
\begin{equation}~\label{time-derivative-b}
\frac{d H_\alpha}{d t} = \sum_{j=1}^n \frac{\partial H_\alpha}{\partial p_j} \frac{d p_j}{d t}
= \sum_{j=1}^n \frac{\partial H_\alpha}{\partial p_j} \sum_{i=1}^n \lambda_{ji} \bigl[p_i(t) - p_j(t)\bigr].
\end{equation}

Adding Equations~\ref{time-derivative-a} and~\ref{time-derivative-b} and using symmetry, we arrive at:
\begin{equation}~\label{time-evolution}
\frac{d H_\alpha}{d t} = - \frac{1}{2} \sum_{i=1}^n \sum_{j=1}^n \lambda_{ij} \bigl(\frac{\partial H_\alpha}{\partial p_i} - \frac{\partial H_\alpha}{\partial p_j}\bigr) \bigl[p_i(t) - p_j(t)\bigr].
\end{equation}

By Theorem \ref{th3}, signed R\'enyi entropy $H_\alpha$, for $\alpha > 1$, is Schur-concave.  Now, use the fact that a quotient of continuous functions with a strictly positive denominator is continuous, to see that $H_\alpha$ is continuous on $\cal D$.  Next calculate:
\begin{equation} \label{eqc1}
\frac{\partial H_\alpha}{\partial p_j} = - \frac{1}{(\alpha - 1)\log_e2} \Bigl[\frac{\alpha |p_j|^{\alpha - 1}\text{sign}(p_j)}{\sum_i |p_i|^\alpha} - \frac{\text{sign}(\sum_i p_i)}{|\sum_i p_i|} \Bigr].
\end{equation}
The key observation here is that $\alpha|p_j|^{\alpha - 1}\text{sign}(p_j)$ is continuous on $\mathbb{R}$ precisely when $\alpha > 1$, and the operations in Equation \ref{eqc1} preserve continuity.  We conclude that $H_\alpha$ is $C^1$ on the interior of $\cal D$.

Using these properties, we can appeal to Marshall, Olkin, and Arnold (2011, Theorem A.4.a, pp.84-85) to obtain that for all $i, j$:
\begin{equation}~\label{schur-inequality-b}
\bigl(\frac{\partial H_\alpha}{\partial p_i} - \frac{\partial H_\alpha}{\partial p_j}\bigr) \bigl(q_i - q_j\bigr) \leq 0,
\end{equation}
on $\cal D$.  Since $\lambda_{ij} \geq 0$ for all $i \not= j$, it follows that $d H_\alpha/d t \geq 0$, as required.
\end{proof}

The key to Theorem \ref{th4} is the non-negativity of the transition rates.  This can be thought of as a classicalization assumption, which, over time, has a smoothing effect on the system and thereby suppresses its ability to exhibit negative measure.  We view our classicalization assumption as an abstraction of specific decoherence mechanisms posited for quantum systems in order to arrive at a quantum H-theorem.  See the references at the beginning of this section, and also Gemmer, Michel, and Mahler (2009) and Gogolin and Eisert (2016) for comprehensive treatments.

\begin{Ex}
We show that while signed R\'enyi entropy is non-decreasing, per Theorem \ref{th4}, signed Shannon entropy may not be.  Let the matrix of transition rates be given by:
\begin{equation}
\Lambda = \begin{pmatrix}
    -1 & 1/2 & 1/2 \\
    1/2 & -1 & 1/2 \\
    1/2 & 1/2 & -1
\end{pmatrix},
\end{equation}
and choose the initial signed probability measure: $P(0) = (-1/7, 3/7, 5/7)$.  Figure 1 shows the evolution of signed R\'enyi $2$-entropy $H_2(t)$ and signed Shannon entropy $H_1(t)$.  While signed Shannon entropy converges to the same limit $\log 3 \approx 1.585$ as signed R\'enyi $2$-entropy, we see that it is transiently non-monotonic.

\vspace{0.15in}
\hspace{60bp}\includegraphics[scale=0.25]{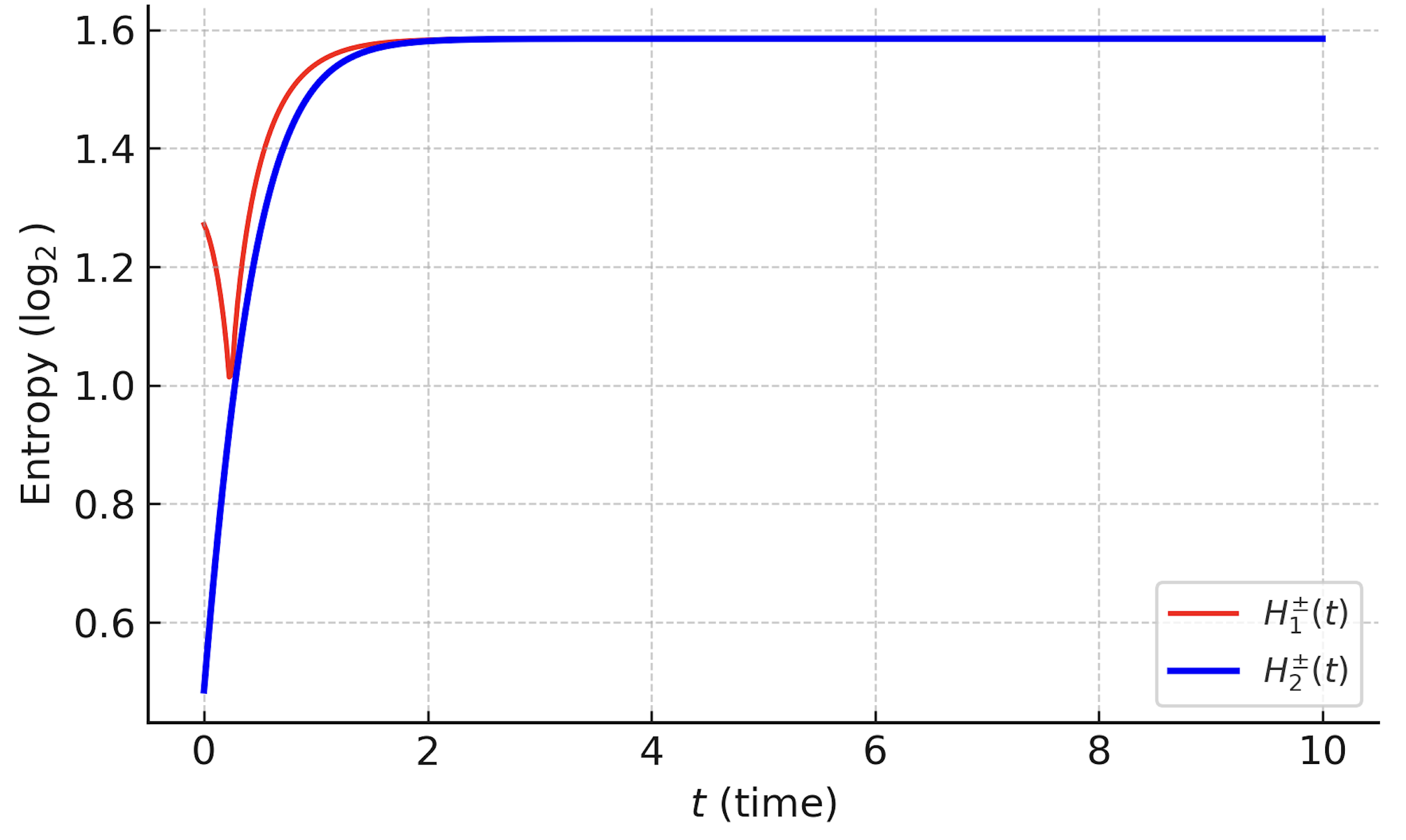}

\hspace{70bp}Figure 1: Evolution of Signed R\'enyi and Signed Shannon Entropies
\end{Ex}

\section{Discrete Moyal-Bracket Evolution}
Our H-Theorem hinged on non-negativity of off-diagonal entries of the matrix of transition rates -- an abstraction of decoherence mechanisms investigated in the literature.  An important check on this result is that signed R\'enyi entropy, for a suitably chosen free parameter, is constant and not increasing under discrete Moyal-bracket evolution on phase space.  This will constitute the phase-space analog to the the standard fact that von Neumann entropy $H_1(\rho) = - \text{Tr} (\rho \log \rho)$ -- and every R\'enyi entropy $H_\alpha(\rho) = -1/(\alpha - 1) \times \log \text{Tr}(\rho^\alpha)$ -- is constant under unitary evolution on Hilbert space.  The check will further underpin signed R\'enyi $\alpha$-entropy as a natural entropy for quasi-probability descriptions.

Fix the Hilbert-space dimension to be an odd-prime power and let $W$ denote the discrete Wigner function constructed from Wootters phase-point operators (Wootters, 1987) or, equivalently, from the finite field construction of Gibbons, Hoffman, and Wootters (2004).  (The argument can be extended to general finite phase spaces using the methods in Ferrie, 2011.)  Time evolution under a Hamiltonian $\cal H$ in this phase-space picture is governed by the discrete Moyal bracket (Gross, 2006):
\begin{equation}
\dot W = \{{\cal H}, W\}_{*d}.
\end{equation}
This evolution is implemented by a real skew-symmetric Liouvillian $L_{\cal H}$ acting on $W$:
\begin{equation} \label{eqmotion}
\dot W = L_{\cal H} W, \,\, L^{\mathsf{T}}_{\cal H} = - L_{\cal H}.
\end{equation}
Skew-symmetry holds because the discrete Moyal bracket inherits the antisymmetry of its continuous counterpart (Moyal, 1949).

\begin{theorem} \label{th5}
Fix a discrete Wigner function $W$ and a Hamiltonian $\cal H$.  Then $\sum_\xi W_\xi^2$ is a constant of motion, so that signed R\'enyi $2$-entropy is conserved, that is: $dH_2/dt = 0$.
\end{theorem}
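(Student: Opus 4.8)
The plan is to reduce the statement to the elementary fact that a skew-symmetric matrix has a vanishing quadratic form, since $\sum_\xi W_\xi^2 = W^{\mathsf T} W$ (viewing $W$ as a column vector) is precisely such a form evaluated along the flow.

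First I would differentiate along a trajectory of Equation~\ref{eqmotion}:
\[
\frac{d}{dt}\sum_\xi W_\xi^2 \;=\; 2\,W^{\mathsf T}\dot W \;=\; 2\,W^{\mathsf T} L_{\cal H}\,W .
\]
The scalar $W^{\mathsf T} L_{\cal H} W$ equals its own transpose, hence equals $W^{\mathsf T} L_{\cal H}^{\mathsf T} W = -\,W^{\mathsf T} L_{\cal H} W$ by skew-symmetry, and is therefore $0$. So $\sum_\xi W_\xi^2$ is a constant of motion. For the entropy statement, note that at $\alpha = 2$ the formula of Theorem~\ref{th1} reads $H^\pm_2(W) = -\log_2\bigl[\,\sum_\xi W_\xi^2\,/\,\bigl|\sum_\xi W_\xi\bigr|\,\bigr]$, where I have used that $W_\xi$ is real so $|W_\xi|^2 = W_\xi^2$ (this is the only place realness of the components enters, and it causes no trouble even when some $W_\xi < 0$). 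It remains to check that the normalization $w(W) = \sum_\xi W_\xi$ is also conserved: this is the phase-space image of $\frac{d}{dt}\mathrm{Tr}\,\rho = \mathrm{Tr}(-i[{\cal H},\rho]) = 0$, equivalently the statement that $L_{\cal H}$ has zero column sums, i.e.\ $\mathbf 1^{\mathsf T} L_{\cal H} = 0$, whence $\frac{d}{dt}\sum_\xi W_\xi = \mathbf 1^{\mathsf T} L_{\cal H} W = 0$; since a Wigner function satisfies $\sum_\xi W_\xi = 1$, we get $\bigl|\sum_\xi W_\xi\bigr| \equiv 1$ throughout. Combining the two facts, $H^\pm_2(W(t))$ is constant, i.e.\ $dH^\pm_2/dt = 0$.

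There is no real analytical obstacle: the whole content is the one-line skew-symmetry identity, and the only point needing a word of care is the trace-preservation (zero-column-sum) property of the discrete Moyal Liouvillian, which I would either import from the construction of the discrete Moyal bracket or justify as above via $\mathrm{Tr}\,\dot\rho = 0$. The point I would stress in the writeup is \emph{why} $\alpha = 2$ is the distinguished value: only for $\alpha = 2$ is $\sum_\xi |W_\xi|^\alpha$ a quadratic form in $W$, and quadratic forms are exactly the invariants of skew-symmetric (orthogonal-group) flows --- the phase-space counterpart of the fact that, among the $\mathrm{Tr}(\rho^\alpha)$, it is $\mathrm{Tr}(\rho^2)$ that is singled out as a unitary invariant on Hilbert space.
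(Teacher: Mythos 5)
Your proposal is correct and follows essentially the same route as the paper: differentiate the quadratic moment $W^{\mathsf T}W$ along the flow and kill it with the skew-symmetry identity $W^{\mathsf T}L_{\cal H}W = -W^{\mathsf T}L_{\cal H}W = 0$. The only difference is that you explicitly verify conservation of the weight $\sum_\xi W_\xi$ (via $\mathbf 1^{\mathsf T}L_{\cal H}=0$), a point the paper's proof leaves implicit when it writes $H^\pm_2 = -\log V(t)$, so your writeup is if anything slightly more complete.
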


\begin{proof}
Define the quadratic moment:
\begin{equation}
V(t) = \sum_\xi W_\xi(t)^2 = W(t)^{\mathsf{T}} \, W(t).
\end{equation}
Taking the time derivative:
\begin{equation}
\frac{dV}{dt} = \bigl(\frac{dW^{\mathsf{T}}}{dt}\bigr) \, W \, + \,  W^{\mathsf{T}} \bigl(\frac{dW}{dt}\bigr) = W^{\mathsf{T}} L_{\cal H}^{\mathsf{T}} \, W \, + \, W^{\mathsf{T}} L_{\cal H} W,
\end{equation}
where we interchange transpose and derivative and substitute in the first part of Equation \ref{eqmotion}.  Substituting in from the second part:
\begin{equation}
\frac{dV}{dt} = W^{\mathsf{T}} (-L_{\cal H}) \, W \, + \, W^{\mathsf{T}} L_{\cal H} W = 0,
\end{equation}
so that $V(t)$ is a constant of motion.  It follows that:
\begin{equation}
\frac{d}{dt} H_2\bigl(W(t)\bigr) = - \frac{d}{dt} \log V(t) = 0,
\end{equation}
as required.
\end{proof}

\section{Renormalization and Inverse Temperature}
In this section we note a renormalization formula for the case $\alpha = 1$ and also comment briefly on interpreting $\alpha$ as a (dimensionless) inverse temperature.  First, we can easily renormalize our signed R\'enyi entropy formula $H_\alpha(P)$ at $\alpha = 1$ by multiplying through by $\alpha - 1$ to obtain:
\begin{equation}
H_1(P)_{\text{Ren}} = - \log \bigl[ \frac{\sum_{i=1}^n |p_i|}{|\sum_{i=1}^n p_i|} \bigr].
\end{equation}
This formula satisfies Extensivity, witnesses negativity ($\sum_i|p_i| > |\sum_i p_i|$ if and only if at least one $p_i$ is negative), and is Schur-concave.  However, it will generally fail to be conserved under Moyal-bracket evolution and so cannot serve as a fundamental entropy in a theory whose elementary reversible dynamics are unitary.

Next, we interpret the signed R\'enyi index $\alpha$ as a dimensionless inverse temperature parameter by writing $\alpha = T_0/T$, where we set the Boltzmann constant $k_B = 1$ and $T_0$ is a fixed reference temperature.

\begin{Ex}
For ordinary probability measures, R\'enyi entropy is well known to be non-increasing in $\alpha$.  To examine the signed case, consider the signed probability measure $P = (\frac{4}{7}, - \frac{1}{7}, \frac{3}{14}, \frac{5}{14})$.  Figure 2 shows the behavior of signed R\'enyi entropy $H_\alpha(P)$ as a function of $1/\alpha$ over a range that produces nonmonotonicity.  (We chose a range that does not include the pole at $\alpha = 1$, since this is just a ``reference-point nonmonotonicity" independent of any system structure.)

\vspace{0.15in}
\hspace{60bp}\includegraphics[scale=0.55]{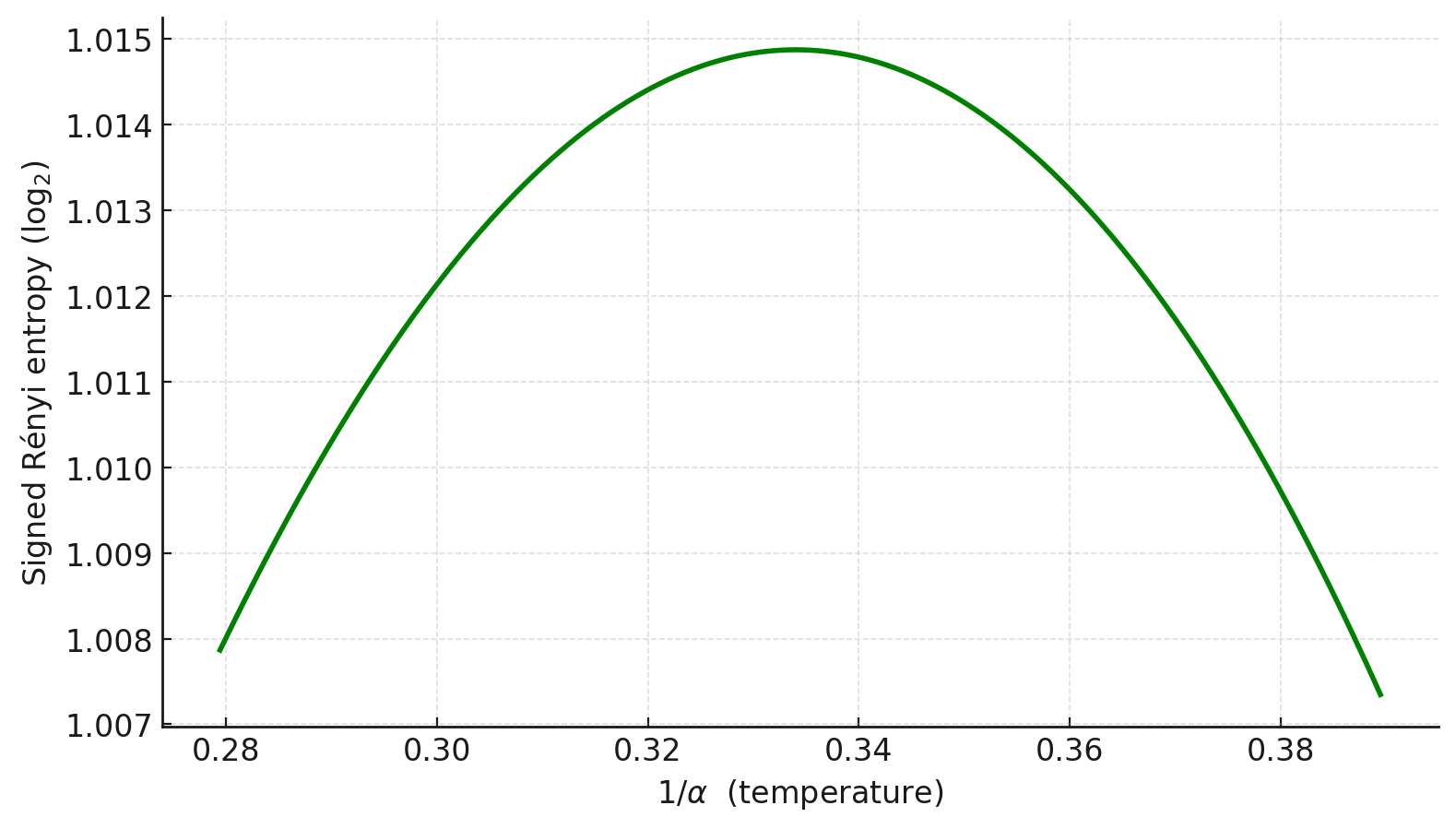}

\hspace{110bp}Figure 2: Nonmonotonicity of Signed R\'enyi Entropy
\end{Ex}

Interpreting $1/\alpha$ as temperature, we see that signed R\'enyi entropy first increases with temperature over the range considered, reaches a peak, and then decreases.  While this example is just a toy model, it can be connected conceptually to quantum systems that display intervals of negative specific heat (Defenu et al., 2023).  To the left of the peak, we are in the normal regime where increasing temperature goes with increasing entropy.  Beyond the peak, we are in an ``anomalous" negative specific-heat regime, where further heating decreases the entropy.  (We note, though, that for a fixed signed probability measure, $H_\alpha(P)$ can display at most one extremum on either side of the pole.  Obtaining an S-shaped caloric curve requires that the measure itself evolve; see Defenu et al., op.cit..)

\section{Conclusion}
We have provided a characterization of entropy for signed measures, arriving at the unique one-parameter family:
\begin{equation}
H_\alpha(P) = - \frac{1}{\alpha - 1} \log \bigl[ \frac{\sum_{i=1}^n |p_i|^\alpha}{|\sum_{i=1}^n p_i|} \bigr], \quad \alpha \ge 0 \,\, \text{and} \,\,  \alpha \not=1,
\end{equation}
singled out by our five information-theoretically interpretable axioms.  The ``obvious" signed Shannon entropy is excluded.

The operational virtues of our formula include: witnessing cancellation, which is closely related to the key quantum resource of negativity (Veitch et al., 2012; Chabaud, Emeriau, and Grosshans, 2021); Schur-concavity, i.e., monotonicity under majorization for all $\alpha > 1$, placing $H_\alpha$ in the family of entropy-like quantities that cannot decrease under mixing; again for $\alpha > 1$, a quantum H-theorem for classicalizing dynamics; and, for $\alpha = 2$, conservation under discrete Moyal-bracket (unitary) evolution.  We also commented briefly on the physical meaning of the fact that signed R\'enyi entropy is non-monotonic in the $\alpha$ parameter.

An open foundational question is the extension of our axiomatic approach to obtain a useful formula for infinite phase space, complementing existing continuous entropy measures (Albarelli et al., 2018).

\renewcommand{\thesection}{A}
\setcounter{equation}{0}
\renewcommand\theequation{A.\arabic{equation}}
\setcounter{theorem}{0}
\renewcommand\thetheorem{A.\arabic{theorem}}

\section*{Appendix}
In the section we prove the sufficiency direction of Theorem \ref{th1}.  The converse is checked by direct substitution.

\begin{lemma} \label{lem:cauchy}
Under Axioms 0, 2$^\prime$, 3, and 4, if $p \not= 0$, then $H((p)) = - \log|p|$.
\end{lemma}

\begin{proof}
Write $h(p) = H((p))$.  Axioms 0 and 2 imply that $h$ is real-valued and continuous except at $0$.  Axiom 4 implies that $h(pq) = h(p) + h(q)$ whenever $p, q \not= 0$.  This is a version of Cauchy's logarithmic functional equation, with general solution $h(p) = c \log|p|$, where $c$ is a real constant.  See Acz\'el and Dhombres (1989, Theorem 3, p.27).  Axiom 3 fixes $c = -1$.
\end{proof}

\begin{lemma} \label{induction}
Under Axiom 5$^\prime$, if $w(P) \not= 0$, then:
\begin{equation} \label{eq:mean-value-a}
H(P) = g^{-1}\Bigl(\frac{\sum_{i=1}^n |p_i| g(H((p_i)))}{|\sum_{i=1}^n p_i|}\Bigr).
\end{equation}
\end{lemma}

\begin{proof}
By induction on $n$.  For $n = 1$, $P = (p_1)$, $w(P) = p_1 \not= 0$, and:
\begin{equation}
H((p_1)) = g^{-1}\Bigl(\frac{|p_1| g(H((p_1)))}{|p_1|}\Bigr) = g^{-1}\bigl(g(H(((p_1)))\bigr),
\end{equation}
and so Equation \ref{eq:mean-value-a} holds.  Now assume inductively that Equation \ref{eq:mean-value-a} holds for measures with $(n-1)$ coordinates.  Let $ P= (p_1, \ldots, p_{n-1}, p_n)$ with $w(P) \not= 0$, and write $Q = (p_1, \ldots, p_{n-1})$, $R = (p_n)$, so that $P = Q \cup R$ and $w(P) = w(Q) + w(R) \not= 0$.  By the induction hypothesis:
\begin{equation}
H(Q) = g^{-1}\Bigl(\frac{\sum_{i=1}^{n-1} |p_i| g(H((p_i)))}{|w(Q)|}\Bigr).
\end{equation}
Applying Axiom 5$^\prime$ to $Q$ and $R$:
\begin{equation} \label{eq:mean-value-b}
H(P) = H(Q \cup R) = g^{-1}\Bigl(\frac{|w(Q)|g(H(Q)) + |w(R)|g(H(R))}{|w(Q) + w(R)|}\bigr).
\end{equation}
Substituting for $g(H(Q))$ and using $w(R) = p_n$, $H(R) = H((p_n))$, and $w(Q) + w(R) = \sum_{i=1}^n p_i$, we obtain:
\begin{align}
H(P) &= g^{-1}\Bigl(\frac{|w(Q)| \frac{\sum_{i=1}^{n-1} |p_i| g(H((p_i)))}{|w(Q)|} + |p_n| g(H((p_n)))}{|\sum_{i=1}^{n-1} p_i + p_n|}\Bigr) \\
&= g^{-1}\Bigl(\frac{\sum_{i=1}^{n-1} |p_i| g(H((p_i))) + |p_n| g(H((p_n)))}{|\sum_{i=1}^n p_i|}\Bigr),
\end{align}
which is Equation \ref{eq:mean-value-a} applied to $n$, as required.
\end{proof}

The next step is to restrict attention to non-negative measures $P$ (again, with $w(P) \not= 0$).  Our axioms reduce to the original R\'enyi (1961) axioms in this case, and so we can carry over the arguments in R\'enyi (1961) and Dar\'oczy (1963) to prove that $g : \mathbb{R} \to \mathbb{R}$ takes one of the forms:
\begin{align}
g(x) &=  - d x + e, \label{eq:affine} \\
g(x) &= d 2^{(1 -\alpha)x} + e, \label{eq:exponential}
\end{align}
where $d \not= 0$, $e$, and $\alpha \not= 1$ are arbitrary constants.

Since for every real number $x$, there is a singleton $(p)$ with $H((p)) = x$, the range of entropy values on non-negative measures is all of $\mathbb{R}$.  Hence the affine or exponential form of $g$ found by R\'enyi-Dar\'oczy holds on the entire domain of $g$.

\begin{lemma}
Under Axiom 4, the function $g$ cannot be affine.
\end{lemma}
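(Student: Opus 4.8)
The plan is to argue by contradiction: assume $g$ is linear, use Lemma~\ref{cauchy} together with the induction on Axiom 5 to write down the explicit entropy functional this forces, and then exhibit a single pair of signed measures on which Extensivity (Axiom 4) fails for \emph{every} admissible choice of the constants.

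Concretely, suppose $g(x) = -dx + e$ with $d \neq 0$. Feeding $H((p)) = -\log|p|$ from Lemma~\ref{cauchy} into Equation~\ref{daroczy-a} and simplifying the quasi-arithmetic mean (which for a linear $g$ is an ordinary weighted average) yields
\[
H(P) \;=\; -\frac{\sum_j |p_j|\log|p_j|}{\bigl|\sum_j p_j\bigr|} \;+\; \frac{e}{d}\Bigl(1 - \frac{\sum_j |p_j|}{\bigl|\sum_j p_j\bigr|}\Bigr).
\]
The first term is precisely the signed Shannon candidate of Equation~\ref{eq11}; the extra term vanishes on non-negative measures but not in general. Evaluating at $P = (2,-1)$ and $Q = \bigl(\tfrac12\bigr)$, so that $P*Q = \bigl(1,-\tfrac12\bigr)$, a short computation gives $H(P) = -2 - 2e/d$, $H(Q) = 1$, and $H(P*Q) = 1 - 2e/d$; hence $H(P*Q) - H(P) - H(Q) = 2 \neq 0$, contradicting Axiom 4. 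Therefore $g$ cannot be linear, and by Lemma~\ref{daroczy-key} it must be of the exponential form.

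I do not anticipate a genuine obstacle here; the only point needing care is that one must work with the \emph{full} functional $H$ produced by Axioms 0, 2$^\prime$, 3, and 5 --- i.e.\ including the additive correction $\tfrac{e}{d}\bigl(1 - \sum|p_j|/|\sum p_j|\bigr)$ --- rather than with $H^\pm_1$ directly. This is exactly why the auxiliary measure $Q$ is taken to be non-negative: doing so kills the portion of the discrepancy that depends on $e/d$, leaving a residual value ($2$ in the computation above) that is nonzero for all admissible $d$ and $e$. By contrast, the pair $P = Q = (2,-1)$ of Example~\ref{ex1} would not close the argument on its own, since for that pair the discrepancy happens to vanish when $e/d = -2$, so some second test pair is unavoidable if one routes the proof through that example.
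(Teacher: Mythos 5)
Your proof is correct and follows essentially the same route as the paper: derive the explicit functional forced by a linear $g$ (signed Shannon plus the $\tfrac{e}{d}$ correction term) and then exhibit a test pair on which Extensivity fails for every admissible $d$ and $e$; the paper uses $P=(\tfrac12,\tfrac12)$, $Q=(2,-1)$ while you use $P=(2,-1)$, $Q=(\tfrac12)$, and in both cases the discrepancy is the same constant $2$, independent of $e/d$. Your side remark that the pair $P=Q=(2,-1)$ from Example~\ref{ex1} alone would not suffice (the discrepancy $-8-4e/d$ vanishes at $e/d=-2$) is accurate and a nice point of care.
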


\begin{proof}
If $g$ is affine as in Equation \ref{eq:affine}, then using Equation \ref{eq:mean-value-a} and Lemma \ref{lem:cauchy}, we get:
\begin{equation} \label{eq:finish-a}
- d \cdot H(P) + e = d \cdot \frac{\sum_i |p_i| \log |p_i|}{|\sum_i p_i|} + e \cdot \frac{\sum_i |p_i|}{|\sum_i p_i|},
\end{equation}
from which:
\begin{equation}
H(P) = - \frac{\sum_i |p_i| \log |p_i|}{|\sum_i p_i|} -\frac{e}{d}\Bigl(\frac{\sum_i |p_i|}{|\sum_i p_i|} - 1\Bigr).
\end{equation}
Now let $P = (\frac{1}{2}, \frac{1}{2})$ and $Q = (2, -1)$, so that $P*Q = (1, -\frac{1}{2}, 1, -\frac{1}{2})$.  We calculate:
\begin{equation}
H(P) = 1, \,\, H(Q) = -2 -\frac{2e}{d}, \,\, H(P*Q) = 1 - \frac{2e}{d},
\end{equation}
so that:
\begin{equation}
H(P*Q) \not= H(P) + H(Q),
\end{equation}
contradicting Axiom 4.
\end{proof}

\begin{lemma}
Under Axiom 4, the function $g$ must be exponential with $e = 0$.
\end{lemma}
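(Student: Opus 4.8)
The plan is to feed the exponential form of $g$ supplied by the previous lemmas into the representation of $H$ obtained earlier, and then to squeeze out the two remaining parameters using the two axioms not yet fully exploited: Extensivity (Axiom~4) will kill $e$, and Real-Valuedness (Axiom~0) will force $\alpha>0$. By the preceding lemmas we may take $g(x)=d\,2^{(1-\alpha)x}+e$ with $d\neq 0$ and $\alpha\neq 1$. Since $g(-\log_2|p_j|)=d\,|p_j|^{\alpha-1}+e$, substituting into Equation~\ref{daroczy-a} and applying $g^{-1}$ gives the closed form
\[
H(P)=\frac{1}{1-\alpha}\,\log_2\!\bigl[\,\tfrac{\sum_i |p_i|^\alpha + c\,(\sum_i |p_i|-|\sum_i p_i|)}{|\sum_i p_i|}\,\bigr],\qquad c:=e/d .
\]
On a non-negative measure the correction term $c\,(\sum_i|p_i|-|\sum_i p_i|)$ vanishes, so $c$ is still unconstrained at this point — which is exactly why R\'enyi's non-negative axioms leave $e$ free.

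To show $c=0$ I would use that $S_\alpha(P)=\sum_i|p_i|^\alpha$, $S_1(P)=\sum_i|p_i|$ and $|w(P)|=|\sum_i p_i|$ are all multiplicative under the product $*$. Hence Axiom~4, after cancelling the nonzero factor $1/(1-\alpha)$ and exponentiating (valid because Axiom~0 forces the bracket to be strictly positive), says precisely that $\Psi(P):=\bigl(S_\alpha(P)+c(S_1(P)-|w(P)|)\bigr)/|w(P)|$ is multiplicative. Evaluating $\Psi(P*(2))=\Psi(P)\,\Psi((2))$ — the choice $Q=(2)$ is convenient because $S_1((2))=|w((2))|$, so the correction term is invisible on that side — collapses to
\[
c\,\bigl(S_1(P)-|w(P)|\bigr)\,\bigl(2-2^{\alpha}\bigr)=0 .
\]
Taking $P$ with a strictly negative entry, say $P=(2,-1)$, makes $S_1(P)>|w(P)|$, and $\alpha\neq 1$ makes $2-2^{\alpha}\neq 0$; therefore $c=0$, i.e.\ $e=0$, and $H$ reduces to $H^\pm_\alpha(P)=-\tfrac{1}{\alpha-1}\log_2\bigl[S_\alpha(P)/|w(P)|\bigr]$.

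It remains to force $\alpha>0$, and here I would appeal to Axiom~0. An admissible signed measure may have a vanishing component — e.g.\ $P=(1,0)$, which has $w(P)=1\neq 0$ — and $|0|^{\alpha}$ is not a real number when $\alpha\le 0$: it is $+\infty$ for $\alpha<0$ and the indeterminate form $0^{0}$ for $\alpha=0$. So for $\alpha\le 0$ the quantity $S_\alpha(P)$, and with it $H(P)$, is not a well-defined real number on all of the domain, contradicting Axiom~0. When $\alpha>0$, on the other hand, $S_\alpha(P)>0$ (some $p_i\neq 0$ because $w(P)\neq 0$), so $H$ is real-valued, and it is non-constant since $H((p))=-\log_2|p|$ by Lemma~\ref{cauchy}. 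Hence $\alpha>0$, completing the proof.

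The step I expect to carry the weight is pinning down $c$: the algebra transferring Axiom~4 to multiplicativity of $\Psi$ is routine, but the conceptual point is that one must test Extensivity against a measure containing a genuinely negative entry — on non-negative measures the parameter $e$ is entirely invisible, so no amount of Extensivity among them would suffice. The $\alpha>0$ step is then essentially immediate once one remembers that tuples with zero entries are legitimate inputs.
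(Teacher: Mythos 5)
Your argument for $e=0$ is correct and is essentially the paper's argument in a slightly tidier dress: the paper also substitutes the exponential $g$ into Equation~\ref{daroczy-a} to get the same representation with the correction term $\tfrac{e}{d}\bigl(\tfrac{\sum_i|p_i|}{|\sum_i p_i|}-1\bigr)$, and then tests Extensivity on the concrete pair $P=(\tfrac12,\tfrac12)$, $Q=(2,-1)$, whereas you package the same computation as multiplicativity of $\Psi$ and test against $Q=(2)$ and $P=(2,-1)$. In both versions the essential point is the one you identify: the parameter $e$ is invisible on non-negative measures and is only killed by probing Extensivity with a genuinely negative entry. Your exclusion of $\alpha<0$ via a measure with a zero component (your $(1,0)$, the paper's $(0,1)$) also matches the paper.

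The one place your argument is weaker is the sub-case $\alpha=0$. Saying that $0^0$ is an ``indeterminate form,'' so that $S_0(P)$ is ``not a well-defined real number,'' does not by itself contradict Axiom~0: Axiom~0 constrains the given functional $H$, not a notational convention, and under either standard convention ($0^0:=1$, or counting only the support) the candidate $H(P)=\log_2\bigl[\,n_{\neq 0}(P)/|\sum_i p_i|\,\bigr]$ is a perfectly well-defined, real-valued, non-constant functional, so no contradiction has actually been derived. To close this case you need to exhibit a genuine violation of one of the axioms at $\alpha=0$ (the paper itself disposes of it by invoking the non-constancy clause of Axiom~0, asserting that the $\alpha=0$ formula degenerates to a constant), rather than appeal to the formula being ambiguous at a zero component. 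Apart from this sub-case, the proof is correct and follows the paper's route.
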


\begin{proof}
If $g$ is exponential as in Equation \ref{eq:exponential}, then again using \ref{eq:mean-value-a} and Lemma \ref{lem:cauchy}, we get:
\begin{equation} \label{finish-b}
d \cdot 2^{(1 - \alpha)H(P)} + e = d \cdot \frac{\sum_i |p_i|^\alpha}{|\sum_i p_i|} + e \cdot \frac{\sum_i |p_i|}{|\sum_i p_i|},
\end{equation}
from which:
\begin{equation}
H(P) = - \frac{1}{\alpha - 1} \log \Bigl[ \frac{\sum_{i=1}^n |p_i|^\alpha}{\left| \sum_{i=1}^n p_i \right|} + \frac{e}{d}\bigl(\frac{\sum_i |p_i|}{|\sum_i p_i|} - 1\bigr)\Bigr].
\end{equation}
Again let $P = (\frac{1}{2}, \frac{1}{2})$ and $Q = (2, -1)$, so that $P*Q = (1, -\frac{1}{2}, 1, -\frac{1}{2})$.  We calculate:
\begin{equation}
H(P) = \log 2, \,\, H(Q) = - \frac{1}{\alpha - 1} \log\bigl[2^\alpha + 1 + \frac{2e}{d}\bigr], \,\, H(P*Q) = - \frac{1}{\alpha - 1} \log\bigl[2 + 2^{1 - \alpha} + \frac{2e}{d}\bigr].
\end{equation}
By Axiom 4:
\begin{equation}
2^{1 - \alpha} \times \bigl[2^\alpha + 1 + \frac{2e}{d}\bigr] = \bigl[2 + 2^{1 - \alpha} + \frac{2e}{d}\bigr],
\end{equation}
which, for any $\alpha \not= 1$, implies $e/d = 0$.
\end{proof}

\begin{lemma} \label{lem:alpha-nonneg}
Under Axiom 0, we must have $\alpha \ge 0$.  The case $\alpha = 0$ is admissible as a limit.
\end{lemma}

\begin{proof}
Suppose $\alpha < 0$. Consider the measure $(0, 1)$, which has
$w((0, 1)) = 1 \not= 0$. Then:
\begin{equation}
\sum_i |p_i|^\alpha = 0^\alpha + 1^\alpha = +\infty,
\end{equation}
so the argument of the log in Equation \ref{eq10} is infinite and $H((0, 1))$ is not finite.  This contradicts Axiom 0.  Thus $\alpha < 0$ is impossible.

For $\alpha = 0$, taking the limit $\alpha \to 0$ in Equation \ref{eq10} yields:
\begin{equation}
H_0(P) = \log_2 \bigl[\frac{\#\{i : p_i \not= 0\}}{|\sum_i p_i|}\bigr],
\end{equation}
which is finite.  One can check directly that $H_0$ satisfies Axioms 0--5$^\prime$, so $\alpha = 0$ is admitted as a limiting case.
\end{proof}

\section*{References}

\noindent Abramsky, S., and A. Brandenburger, ``The Sheaf-Theoretic Structure of Non-Locality and Contextuality," \textit{New Journal of Physics}, 13, 2011, 113036.
\vspace{0.1in}

\noindent Acz\'el, J., and J. Dhombres, \textit{Functional Equations in Several Variables}, Cambridge University Press, 1989, 26-27.
\vspace{0.1in}

\noindent Albarelli, F., M. Genoni, M. Paris, and A. Ferraro,
``Resource Theory of Quantum Non-Gaussianity and Wigner Negativity," \textit{Physical Review A}, 98, 2018, 052350.
\vspace{0.1in}

\noindent Aw, C., K. Onggadinata, D. Kaszlikowski, and V. Scarani, ``Quantum Bayesian Inference in Quasiprobability Representations," \textit{PRX Quantum}, 4, 2023, 020352.
\vspace{0.1in}

\noindent Baez, J., ``R\'enyi Entropy and Free Energy," \textit{Entropy}, 24, 2022, 706.
\vspace{0.1in}

\noindent Bakiev, T., D. Nakashidze, and A. Savchenko, ``Certain Relations in Statistical Physics Based on R\'enyi Entropy" \textit{Moscow University Physics Bulletin}, 75, 2020, 559-569.
\vspace{0.1in}

\noindent Brandenburger, A., P. La Mura, and S. Zoble, ``R\'enyi Entropy, Signed Probabilities, and the Qubit," \textit{Entropy}, 24, 2022, 1412.
\vspace{0.1in}

\noindent Cabrera, G., \textit{A Unifying Approach to Quantum Statistical Mechanics: A Quantum Description of Macroscopic Irreversibility}, 2017, at https://www.ifi.unicamp.br/$\sim$cabrera/.
\vspace{0.1in}

\noindent Chabaud, U., P.-E. Emeriau, and F. Grosshans, ``Witnessing Wigner Negativity," \textit{Quantum}, 5, 2021, 471.
\vspace{0.1in}

\noindent Coles, P., M. Berta, M. Tomamichel, and S. Wehner, ``Entropic Uncertainty Relations and Their Applications," \textit{Review of Modern Physics}, 89, 2017, 015002.
\vspace{0.1in}

\noindent Csisz\'ar, I., ``Axiomatic Characterizations of Information Measures," \textit{Entropy}, 10, 2008, 261-273.
\vspace{0.1in}

\noindent Dar\'oczy, Z., ``\"Uber die gemeinsame Charakterisierung der zu den nicht vollst\"andig en Verteilungen geh\"origen Entropien von Shannon und von R\'enyi," \textit{Zeitschrift f\"ur Wahrscheinlichkeitstheorie und verwandte Gebiete}, 1, 1963, 381-388.
\vspace{0.1in}

\noindent Defenu, N., T. Donner, T. Macr\'i, G. Pagano, S. Ruffo, and A. Trombettoni, ``Long-Range Interacting Systems," \textit{Reviews of Modern Physics}, 95, 2023, 035002.
\vspace{0.1in}

\noindent Ferrie, C., ``Quasi-Probability Representations of Quantum Theory with Applications to Quantum Information Science," \textit{Reports on Progress in Physics}, 74, 2011, 116001.
\vspace{0.1in}

\noindent Feynman, R., ``Negative Probability," in Hiley, B., and F. Peat (eds.), \textit{Quantum Implications: Essays in Honour of David Bohm}, Routledge \& Kegan Paul, 1987, pp.235-248.
\vspace{0.1in}

\noindent Gemmer, J. M. Michel, and G. Mahler, \textit{Quantum Thermodynamics}, Springer, 2nd edition, 2009.
\vspace{0.1in}

\noindent Gherardini, S., and G. De Chiara, ``Quasiprobabilities in Quantum Thermodynamics and Many-Body Systems," \textit{PRX Quantum}, 5, 2024, 030201.
\vspace{0.1in}

\noindent Gibbons, K., M. Hoffman, and W. Wootters, ``Discrete Phase Space Based on Finite Fields," \textit{Physical Review A}, 70, 2004, 62101.
\vspace{0.1in}

\noindent Gogolin, C., and J. Eisert, ``Equilibration, Thermalisation, and the Emergence of Statistical Mechanics in Closed Quantum Systems," \textit{Reports on Progress in Physics}, 79, 2016, 056001.
\vspace{0.1in}

\noindent Goold, J. M. Huber, A. Riera, L. del Rio, and P. Skrzypczyk, "The Role of Quantum Information in Thermodynamics -- A Topical Review," \textit{Journal of Physics A}, 49, 2016, 143001.
\vspace{0.1in}

\noindent Gross, D., ``Hudson's Theorem for Finite-Dimensional Quantum Systems," \textit{Journal of Mathematical Physics}, 47, 2006, 122107.
\vspace{0.1in}

\noindent Hardy, G., J. Littlewood, and G. P\'olya, \textit{Inequalities}, 2nd edition, Cambridge University Press, 1952.
\vspace{0.1in}

\noindent Ji, K., H. Mishra, M. Mosonyi, and M. Wilde, ``Barycentric Bounds on the Error Exponents of Quantum Hypothesis Exclusion, 2024, at https://arxiv.org/pdf/2407.13728.
\vspace{0.1in}

\noindent Jizba, P., and T. Arimitsu, ``The World According to R\'enyi: Thermodynamics of Multifractal Systems," \textit{Annals of Physics}, 312, 2004, 17-59.
\vspace{0.1in}

\noindent Koukoulekidis, N., and D. Jennings, ``Constraints on Magic State Protocols from the Statistical Mechanics of Wigner Negativity," \textit{npj Quantum Information}, 8, 2022, 42.
\vspace{0.1in}

\noindent Lesovik, G., A. Lebedev, I. Sadovskyy, M. Suslov, and V. Vinokur, ``H-Theorem in Quantum Physics," \textit{Scientific Reports}, 6, 2016, 32815.
\vspace{0.1in}

\noindent Lu, T.-C., and T. Grover, ``R\'enyi Entropy of Chaotic Eigenstates," \textit{Physical Review E}, 99, 2019, 032111.
\vspace{0.1in}

\noindent Marshall, A., I. Olkin, and B. Arnold, \textit{Inequalities: Theory of Majorization and Its Applications}, 2nd edition, Springer, 2011.
\vspace{0.1in}

\noindent Moyal, J., ``Quantum Mechanics as a Statistical Theory," \textit{Mathematical Proceedings of the Cambridge Philosophical Society}, 45, 1949, 99-124.
\vspace{0.1in}

\noindent Onggadinata, K., P. Kurzynski, and D. Kaszlikowski, ``Qubits from the Classical Collision Entropy," \textit{Physical Review A}, 107, 2023a, 032214.
\vspace{0.1in}

\noindent Onggadinata, K., P. Kurzynski, and D. Kaszlikowski, ``Simulations of Quantum Nonlocality with Local Negative Bits," \textit{Physical Review A}, 108, 2023b, 032204.
\vspace{0.1in}

\noindent Onggadinata, K., A. Tanggara, M. Gu, and D. Kaszlikowski, ``Negativity as a Resource for Memory Reduction in Stochastic Process Modeling," 2024, at https://arxiv.org/abs/2406.17292.
\vspace{0.1in}

\noindent Pauli, W., ``\"Uber das H-Theorem vom Anwachsen der Entropie vom Standpunkt der neuen Quantenmechanik," in Debye, P. (ed.), \textit{Probleme der modernen Physik: Arnold Sommerfeld zum 60, Geburtstag gewidmet}, Hirzel, 1928, 30-45.
\vspace{0.1in}

\noindent Popescu, S., and D. Rohrlich, ``Quantum Nonlocality as an Axiom," \textit{Foundations of Physics}, 24, 1994, 379-385.
\vspace{0.1in}

\noindent R\'enyi, A., ``On Measures of Information and Entropy," in Neyman, J. (ed.), \textit{Proceedings of the 4th Berkeley Symposium on Mathematical Statistics and Probability}, University of California Press, 1961, 547-561.
\vspace{0.1in}

\noindent R\'enyi, A., \textit{Probability Theory}, 1970, North-Holland and Akada\'emiai Kiad\'o.
\vspace{0.1in}

\noindent Rundle, R., and M. Everitt, ``Overview of the Phase Space Formulation of Quantum Mechanics with Application to Quantum Technologies," \textit{Advanced Quantum Technolgies}, 4, 2021, 2100016.
\vspace{0.1in}

\noindent Schwartz, M., ``Statistical Mechanics," Harvard University, Spring 2021, at https://scholar.harvard.edu/files/\newline schwartz/files/physics\_181\_lectures.pdf.
\vspace{0.1in}

\noindent Shannon, C., ``A Mathematical Theory of Communication," \textit{Bell System Technical Journal}, 27, 1948, 379-423 and 623-656.
\vspace{0.1in}

\noindent St\'ephan, J.-M., G. Misguich, and V. Pasquier, ``R\'enyi Entropy of a Line in Two-Dimensional Ising Models," \textit{Physical Review B}, 82, 2010, 125455.
\vspace{0.1in}

\noindent Veitch, V., C. Ferrie, D. Gross, and J. Emerson, ``Negative Quasi-Probability as a Resource for Quantum Computation," \textit{New Journal of Physics}, 14, 2012, 113011.
\vspace{0.1in}

\noindent van de Wetering, J., ``Quantum Theory is a Quasi-Stochastic Process Theory," in Coecke, B., and A. Kissinger (eds.), \textit{14th International Conference on Quantum Physics and Logic}, EPTCS 266, 2018, 179-196.\vspace{0.1in}

\noindent von Neumann, J., ``Beweis des Ergodensatzes und des H-Theorems in der neuen Mechanik," \textit{Zeitschrift f\"ur Physik}, 57, 1929, 30-70.
\vspace{0.1in}

\noindent von Neumann, J., \textit{Mathematische Grundlagen der Quantenmechanik}, Springer, 1932.
\vspace{0.1in}

\noindent Wang, X., and M. Wilde, ``$\alpha$-Logarithmic Negativity," \textit{Physical Review A}, 102, 2020, 032416.
\vspace{0.1in}

\noindent Whittle, P., \textit{Probability via Expectation}, Springer-Verlag, 4th edition, 2000.
\vspace{0.1in}

\noindent Wigner, E., ``On the Quantum Correction for Thermodynamic Equilibrium," \textit{Physical Review}, 40, 1932, 749.
\vspace{0.1in}

\noindent Wootters, W., ``A Wigner-Function Formulation of Finite-State Quantum Mechanics, \textit{Annals of Physics}, 176, 1987, 1-21.

\end{document}